\newcommand{\cM}{\mathcal{M}}
\newcommand{\cK}{\mathcal{K}}
\newcommand{\cN}{\mathcal{N}}
\newcommand{\cS}{\mathcal{S}}
\newcommand{\cH}{\mathcal{H}}
\newcommand{\bb}{\boldsymbol b}
\newcommand{\bd}{\boldsymbol d}
\newcommand{\texit}[1]{\textit{#1}}
\newtheorem{myth}{Theorem}
\newtheorem{myprop}[myth]{\bf Proposition}
\newtheorem{myle}[myth]{\bf Lemma}
\def\BibTeX{{\rm B\kern-.05em{\sc i\kern-.025em b}\kern-.08em
    T\kern-.1667em\lower.7ex\hbox{E}\kern-.125emX}}
\begin{document}

\title{Enhancing Mobile Crowdsensing Efficiency: A Coverage-aware Resource Allocation Approach}
\author{Yaru Fu\IEEEauthorrefmark{1}, Yue Zhang\IEEEauthorrefmark{2}, Zheng Shi\IEEEauthorrefmark{3}, Yongna Guo\IEEEauthorrefmark{4}, and Yalin Liu\IEEEauthorrefmark{1} \\
\IEEEauthorrefmark{1}School of Science and Technology, Hong Kong Metropolitan University, Hong Kong SAR\\
\IEEEauthorrefmark{2}Department of Electronic and Information Engineering, Shantou University, China\\
\IEEEauthorrefmark{3}School of Intelligent Systems Science and Engineering, Jinan University, China\\
\IEEEauthorrefmark{4}School of Electrical Engineering and Computer Science, KTH Royal Institute of Technology, Stockholm, Sweden\\
\vspace{-0.5cm}
}
\maketitle

\begin{abstract}
In this study, we investigate the resource management challenges in next-generation mobile crowdsensing networks with the goal of minimizing task completion latency while ensuring coverage performance, i.e., an essential metric to ensure comprehensive data collection across the monitored area, yet it has been commonly overlooked in existing studies. To this end, we formulate a weighted latency and coverage gap minimization problem via jointly optimizing user selection, subchannel allocation, and sensing task allocation. The formulated minimization problem is a non-convex mixed-integer programming issue. To facilitate the analysis, we decompose the original optimization problem into two subproblems. One focuses on optimizing sensing task and subband allocation under fixed sensing user selection, which is optimally solved by the Hungarian algorithm via problem reformulation. Building upon these findings, we introduce a time-efficient two-sided swapping method to refine the scheduled user set and enhance system performance. Extensive numerical results demonstrate the effectiveness of our proposed approach compared to various benchmark strategies.
\end{abstract}

\smallskip

\begin{IEEEkeywords}
Mobile crowdsensing networks, coverage consideration, sensing task allocation, subband allocation, user scheduling.
\end{IEEEkeywords}

\section{Introduction}
As intelligent applications such as virtual reality (VR), augmented reality (AR), extended reality (XR), and beyond gain popularity, the demand for data collection and analysis is increasing significantly. These innovative technologies require vast amounts of real-time data sensing and analysis to function effectively, posing significant challenges for traditional internet-of-things-driven static sensing systems \cite{wenshuai,yfu2023}. 
Mobile crowdsensing (MCS) has emerged as a promising paradigm that leverages the sensing capabilities of mobile devices carried by end-users. 
This paradigm enables large-scale, distributed data collection in real-time, significantly enhancing the spatial and temporal resolution of sensed data and providing a cost-effective solution for monitoring diverse environments \cite{MCS2}.

Prior studies on MCS have largely focused on incentivizing user participation through game theory, auction theory, and machine learning techniques \cite{incentive_review}. 
They proposed a divide-then-conquer approach to tackle the non-convex problem. 
In \cite{us_2}, the utilization of edge networks was explored to support MCS systems in task allocation, taking into account the constraints of limited user energy budgets. 
In a separate study \cite{Li}, authors addressed the sensing reward maximization problem by jointly optimizing task allocation, user selection, and the management of limited wireless network resources such as energy and transmit power. 
The study considered both static and dynamic subband allocation strategies. 
In \cite{wcnc_wenshuai}, unmanned aerial vehicles (UAVs) were leveraged to improve the efficiency of MCS systems. The study focused on optimizing resource allocation for sensing, communication, and computing tasks, alongside UAV trajectory planning to maximize the total computed data volume. 
To tackle this long-term optimization challenge, the authors introduced a novel method based on deep reinforcement learning.

The existing resource management efforts have primarily aimed at maximizing sensing rewards or minimizing energy consumption \cite{WCNC2024,us_2,Li,wcnc_wenshuai}. However, these approaches often neglect the spatial coverage of the scheduled users, leading to incomplete data collection and pottial blind spots in monitored areas. Addressing this gap is essential to ensure that sensing tasks are executed comprehensively across the network.
To this end, we propose a novel resource allocation framework that jointly optimizes user selection, subchannel allocation, and sensing task distribution to minimize overall system latency while maximizing subarea coverage. By formulating this problem as a weighted minimization of latency and coverage gaps, we derive an efficient solution that enhances the performance of MCS networks under practical constraints.

The key contributions of this paper can be summarized as:
First, we formulate a novel framework that minimizes task latency and coverage gaps in mobile crowdsensing networks by jointly optimizing user selection, subband allocation, and task distribution.
Second, the problem is decomposed into two subproblems: (1) optimal task and subband allocation solved by the Hungarian algorithm, and (2) user scheduling solved by a two-sided swapping method.
Last, numerical results show that the proposed approach outperforms existing benchmarks, achieving lower latency and improved coverage across diverse network conditions.

\section{System Model}
In this section, we first introduce the system model for our considered mobile crowdsensing networks. Then, we elaborate on the detailed explanations of the sensing model, transmission model, and coverage performance metric.
\subsection{Network Model}
In our considered mobile crowdsensing network (MCN), a single base station (BS) exists in the center of its coverage area, serving $K$ users. The coverage area is segmented into $M$ subareas, with the set of indices represented as $\cM=\{1,2,\ldots,M\}$. We define $\Phi_k \in \cM$ as the subarea index to which the user $k$ is assigned. The users in this context are equipped for data sensing and can be taken as providers of sensing services. Let $\cK=\{1,2,\ldots,K\}$ represent the index set of all users, and $\cN=\{1,2,\ldots,N\}$ denote the set of indices for all $N$ orthogonal subbands. Adhering to common practices, we assume $K>N$. In other words, when a sensing task is published from the BS with a data size indicated by $\bar{d}$, only a subset of users can be scheduled to execute the sensing task collaboratively. The scheduling of user $k$ is represented by the binary indicator $s_k \in \{0,1\}$. Specifically, $s_k=1$ signifies that user $k$ has been scheduled for the sensing task, and $s_k=0$ otherwise. Each scheduled user is allocated a subband to transmit the sensing data to the BS. For $k\in\cK$ and $n\in\cN$, we define $b_{k,n}\in\{0,1\}$ as the subband allocation indicator. More precisely, $b_{k,n}=1$ if and only if subband $n$ is assigned to user $k$ for sensing data transmission. In addition, we assume that each subband can be utilized by at most one user. Therefore, the following constraint should be satisfied:
\begin{equation}
	\sum_{k\in\cK}b_{k,n} \leq 1,~n\in\cN.
\end{equation}
With the definitions provided earlier, we can establish the relationship between the schedule indicator and the subband allocation indicator as follows:
\begin{equation} \label{ab}
	s_k=\sum_{n\in\cN}b_{k,n}.
\end{equation}
This is indeed accurate, given that only the users scheduled for the task can be allocated subbands for data transmission. Leveraging this relationship, the variable $s_k$ can be substituted with $b_{k,n}$. Further elaboration on this will be provided in the upcoming problem formulation.

\subsection{Data Sensing and Transmission Model}
In the subsequent sections, we discuss the data sensing and transmission model utilized in our investigation. When the BS designates user $k$ to undertake the sensing task, the BS communicates to user $k$ the details of its sensing data bits, denoted as $d_k$. Let $v_k$ represent the sensing rate of user $k$. The overall duration needed for user $k$ to complete the data sensing, denoted by $\text{T}_{k,s}$, is given below:
\begin{equation}\label{se}
	\text{T}_{k,s}=\frac{d_k}{v_k}.
\end{equation} 
Subsequently, we will expound on the transmission delay of user $k$. Let $P_k$ be the transmit power of user $k$, and let $g_{k,n}$ be the channel information of user $k$ on subband $n$. The transmission data rate of user $k$, referred to as $R_k$, is given by the following equation:
\begin{equation}
	R_k=\sum_{n\in\cN}b_{k,n}B_n\log_2(1+\frac{P_kg_{k,n}}{N_0B_n}),
\end{equation} 
where $B_n$ is the bandwidth of subband $n$. Define $\text{T}_{k,f}$ as the transmission delay of user $k$, where $k\in\cK$. We have
\begin{equation} \label{tr}
	\text{T}_{k,f}=\frac{d_k}{R_k}.
\end{equation}
Based on the analysis above, we can calculate the total latency of user $k$, which includes both sensing and transmission delays. This total latency is denoted by $\text{T}_k$, and is given by the following formula:
\begin{equation} \label{T}
	\text{T}_k=\text{T}_{k,s}+\text{T}_{k,f},
\end{equation}
wherein $\text{T}_{k,s}$ and $\text{T}_{k,f}$ are given in \eqref{se} and \eqref{tr}, respectively.
Upon receiving feedback from all scheduled users, the BS will aggregate the results and perform subsequent instructions. Given the high data analysis proficiency of the BS, the aggregation delay is deemed insignificant. Consequently, the total system delay, denoted as $\text{T}_{\text{over}}$, is computed as follows:
\begin{equation}\label{T_o}
	\text{T}_{\text{over}}=\max_{k\in\cK}\text{T}_k,
\end{equation}
in which $\text{T}_k$ is defined in \eqref{T}.
\subsection{Coverage Performance Metric}
In this subsection, we analyze the coverage metric when scheduling users to do task sensing. Considering users' coverage when scheduling tasks in mobile crowdsensing systems is crucial to ensure comprehensive data collection across the monitored area, enhancing data accuracy and completeness. Proper coverage allocation also improves system reliability by mitigating potential data gaps and ensuring effective utilization of available resources. In other words, when selecting users, we try to improve the subarea numbers of all scheduled users. With foregoing definitions, a coverage metric, referred to as $\bar\Phi$, is defined as follows:
\begin{equation}
\bar\Phi=\text{Len}\{s_k\Phi_k|k\in\cK\},
\end{equation}
where $\text{Len}$ is an operation to find the different items in a set. Based on \eqref{ab}, we can rewrite $\bar\Phi$ as follows:
\begin{equation}
\bar\Phi=\text{Len}\{\Phi_k\sum_{n\in\cN}b_{k,n}|k\in\cK\}.
\end{equation}

\section{Problem Formulation}
In this paper, we aim to minimize the overall delay of the MCN while taking into account the coverage of the scheduled users and limited resources. For notation simplicity, we denote $\bb=(b_{k,n})_{k\in\cK,n\in\cN}$ as the subband allocation strategy. Besides, let $\bd=(d_k)_{k\in\cK}$ be the sensing task allocation for the system.  Moreover, we define $\Phi=M-\bar\Phi$ as the coverage gap. With the above-mentioned definitions, the joint optimization problem can be mathematically formulated as follows:   
\vspace{-0.1cm}
\begin{align} \label{obj}
	& \underset{\bb,\bd}{\text{minimize}} ~w\mathrm{Norm}(\text{T}_{\text{over}},\eta)+(1-w)\Phi \\
	\text{s.t.}\nonumber~
	&\text{C1}:~\sum_{k\in\cK}d_k\geq \bar d,~\nonumber\\
	&\text{C2}:~(1-\sum_{n\in\cN}b_{k,n})d_k=0,~k\in\cK,\nonumber\\
	&\text{C3}:~0 \leq d_k\leq \bar d, ~k\in\cK,\nonumber\\
	&\text{C4}:~\sum_{k\in\cK}b_{k,n} \leq 1,~n\in\cN,\nonumber\\
	&\text{C5}:~\sum_{n\in\cN}b_{k,n} \leq 1,~k\in\cK,\nonumber\\
	&\text{C6}:~b_{k,n}\in\{0,1\},~k\in\cK,~n\in \cN,\nonumber
\end{align}
where $\text{T}_{\text{over}}$ in the objective function is introduced in \eqref{T_o}. 
$\mathrm{Norm}(x,\eta)=\frac{2}{1+\exp(-\frac{x}{2\eta})}-1$ is a function normalizing the total delay into $[0,1]$ based on the scaling factor $\eta$  \cite{DeepHuang}.
$w$ is the weight to strike a balance between the significance of latency and coverage. C1 is implemented to ensure the fulfillment of the sensing task initiated by the BS. C2 stipulates that a user will solely be assigned subbands if it has been scheduled for the task. C3 ensures that the sensing task bits allocated to each user remain non-negative and are bounded above by $\bar d$. Furthermore, C4 and C5 constrain each subband to be assigned to at most one user and each user to be allocated at most one subband, respectively.
C6 emphasizes the binary nature of $b_{k,n}$. 
The formulated problem represents a non-convex mixed-integer programming problem due to the non-convexity of C2 concerning $\bb$, and is challenging to be solved. In the next section, we propose an optimal resource allocation approach for problem \eqref{obj} by leveraging its structural characteristics under a specified user scheduling policy. Specifically, with fixed scheduled users, the original minimization problem can be reformulated as a maximum-weighted matching problem within a bipartite graph. The optimal solution to this problem can be derived using the Hungarian algorithm. Subsequently, the primary emphasis is placed on devising the user scheduling strategy, which is addressed through a time-efficient two-sided swapping methodology.

\section{Joint Optimization Algorithm Design} \label{alg}
In this section, we present our joint optimization solution for \eqref{obj}. Initially, we introduce certain definitions that will be pivotal in the ensuing analysis. Define $\cS \subset \cK$ as the set of indexes for the scheduled users. In addition, let $\alpha_k=\frac{1}{v_k}+\frac{1}{R_k}$ for $k\in\cS$. With the subband allocation scheme $\bb$ given, $R_k$ remains constant, rendering $\alpha_k$ achievable. Similarly, $\Phi$ is predetermined based on a specified user scheduling policy. Consequently, the original problem \eqref{obj} simplifies to the subsequent minimization problem:
\begin{align} \label{obj1}
	& \underset{\bd}{\text{min}}~\underset{k\in\cS}{\text{max}}~\alpha_kd_k \\
	\text{s.t.}\nonumber~
	&\text{C1}:~\sum_{k\in\cS}d_k\geq \bar d,~\nonumber\\
	&\text{C3}:~0 \leq d_k\leq \bar d, ~k\in\cS.\nonumber
\end{align}
Let $d^*_k$ represent the optimal sensing data allocation for user $k$, as detailed in the following lemma.
\begin{myle}
	The optimal solution to problem \eqref{obj1} is expressed as $d^*_k=\frac{\bar d}{\alpha_k \left(\sum_{k\in\cS}\frac{1}{\alpha_k}\right)}$, for all $k\in\cS$.
\end{myle}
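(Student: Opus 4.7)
The plan is to prove the lemma by a standard min–max equalization argument: at any optimum of \eqref{obj1} the sum constraint C1 must be tight, and all terms $\alpha_k d_k$ must be equal; these two conditions together pin down the closed-form expression for $d_k^*$.

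First, I would argue that constraint C1 is active at any optimum, i.e., $\sum_{k\in\cS} d_k^* = \bar d$. Indeed, if $\sum_{k\in\cS}d_k > \bar d$ strictly, then since $\alpha_k>0$ for every $k\in\cS$, one can uniformly scale the vector $\bd$ by a factor $\rho\in(0,1)$ chosen so that $\rho\sum_k d_k = \bar d$. This scaling preserves all inequalities in C3 (since it only decreases the coordinates), keeps C1 feasible, and strictly reduces the objective $\max_{k\in\cS}\alpha_k d_k$, contradicting optimality.

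Next, I would show that at optimum all products $\alpha_k d_k^*$ take a common value $c^*$. Suppose for contradiction that there exist $i,j\in\cS$ with $\alpha_i d_i^* < \alpha_j d_j^* = \max_{k\in\cS}\alpha_k d_k^*$. For a sufficiently small $\epsilon>0$, set $\tilde d_j = d_j^* - \epsilon$ and $\tilde d_i = d_i^* + \epsilon$, leaving all other coordinates unchanged. The sum $\sum_k d_k$ is preserved, so C1 still holds, and both $\tilde d_i,\tilde d_j$ remain in $[0,\bar d]$ for $\epsilon$ small. For small enough $\epsilon$ we have $\alpha_j \tilde d_j < \alpha_j d_j^*$ and $\alpha_i \tilde d_i < \alpha_j d_j^*$, so the maximum strictly decreases—again a contradiction. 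Hence $\alpha_k d_k^* = c^*$ for every $k\in\cS$.

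Combining the two facts, $d_k^* = c^*/\alpha_k$ and $\sum_{k\in\cS} c^*/\alpha_k = \bar d$, which yields $c^* = \bar d / \sum_{k\in\cS} 1/\alpha_k$ and therefore the claimed expression. The last step I would address is feasibility with respect to C3: $d_k^*\ge 0$ is immediate since $\alpha_k>0$, while $d_k^*\le \bar d$ follows because $\alpha_k\sum_{j\in\cS}\alpha_j^{-1}\ge \alpha_k\cdot \alpha_k^{-1}=1$.

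The main obstacle, in my view, is not the algebra but the equalization step: one has to verify that a swap between a ``hot'' user $j$ and a ``cold'' user $i$ is actually feasible, i.e., that $\tilde d_i$ does not exceed $\bar d$ and $\tilde d_j$ does not drop below zero. This is handled by choosing $\epsilon$ small enough, and by noting that if $d_j^*=\bar d$ then necessarily $\alpha_j d_j^*\ge \alpha_j d_i^*$ for any other $i$, which is consistent with the swap direction. Once this technicality is cleared, the rest is routine.
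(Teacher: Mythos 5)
Your proposal takes essentially the same route as the paper: the printed proof simply \emph{asserts} that at the optimum C1 is tight and all products $\alpha_k d_k$ are equal, then solves for the common value, whereas you supply the scaling and exchange arguments that justify those two assertions. That extra rigor is welcome, but one step does not go through as written.

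In the equalization argument you move $\epsilon$ from a single maximizer $j$ to a non-maximizer $i$ and conclude that ``the maximum strictly decreases.'' This is only true if $j$ is the \emph{unique} maximizer; if another user $l\neq j$ also attains $\max_{k\in\cS}\alpha_k d_k^*$, your swap leaves the objective unchanged and no contradiction is obtained. The fix is standard: transfer mass simultaneously from \emph{every} maximizer to $i$ (take $\epsilon/|\mathcal{A}|$ from each element of the set $\mathcal{A}$ of maximizers and add $\epsilon$ to $d_i$), or bypass the exchange argument altogether with the one-line bound
\begin{equation*}
\bar d \;\le\; \sum_{k\in\cS} d_k \;=\; \sum_{k\in\cS}\frac{\alpha_k d_k}{\alpha_k} \;\le\; \Bigl(\max_{k\in\cS}\alpha_k d_k\Bigr)\sum_{k\in\cS}\frac{1}{\alpha_k},
\end{equation*}
valid for every feasible $\bd$, which shows the optimal value is at least $\bar d/\sum_{k\in\cS}\alpha_k^{-1}$; the claimed $d^*_k$ is feasible (as you checked) and attains this bound, so it is optimal. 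Two smaller points: your closing feasibility discussion is slightly off target---once the sum equals $\bar d$, the scenario $\alpha_i d_i^*<\alpha_j d_j^*$ forces $d_j^*>0$ and $d_i^*<\bar d$, so the $\epsilon$-move is automatically feasible---and, if you keep the necessary-conditions structure, you should also note that an optimum exists (compact feasible set, continuous objective) so that the two conditions, which determine $\bd$ uniquely, indeed identify the minimizer.
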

\begin{proof}
	It can be shown that the optimal solution to problem \eqref{obj1} is achieved when the inequality in C1 is satisfied with equality and the condition $\alpha_id_i=\alpha_jd_j$, $\forall i,j\in\cS$, and $i\neq j$ holds. Let $\alpha_id_i=\beta$. Therefore, we can write:
	\begin{equation}
		\sum_{k\in\cS}\frac{\beta}{\alpha_k}=\bar d,
	\end{equation}
	which implies that $\beta=\frac{\bar d}{\sum_{k\in\cS}\frac{1}{\alpha_k}}$. Substituting this back into $\alpha_id_i=\beta$ completes the proof of the lemma.
\end{proof}

Next, we demonstrate how the original problem, under a given set $\cS$, can be reformulated as a maximum weighted matching problem on a bipartite graph. For each subband $n\in\cN$, let $k_n \in \cK$ represent the user assigned to subband $n$, i.e., $k_n=k$ if $b_{k,n}=1$. Additionally, define $\alpha_{k_n}=\frac{1}{v_{k_n}}+\frac{1}{R_{k_n}}$. Using these definitions, we can present the following theorem:
\begin{myth} \label{myth}
	For a given set $\cS$, the original problem \eqref{obj} can be reformulated as an equivalent optimization problem with the objective:
	\begin{equation}
	\min_{k_1,k_2,\ldots,k_N} \frac{\bar d}{\sum_{n\in\cN}\frac{1}{\alpha_{k_n}}},
	\end{equation}
	subject to the constraints that $k_n \in \cS$ for all $n\in\cN$, and $k_i \neq k_j$ for any $i,j\in\cN$ where $i\neq j$.
\end{myth}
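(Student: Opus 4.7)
The plan is to combine the previous lemma with a change-of-indexing over subbands. With $\cS$ held fixed, the coverage gap $\Phi=M-\bar\Phi$ is a constant, so the objective in \eqref{obj} reduces (up to constants and the monotonicity of $\mathrm{Norm}(\cdot,\eta)$) to minimizing $\text{T}_{\text{over}}=\max_{k\in\cS}\alpha_k d_k$. This is exactly the inner problem solved in the lemma, whose optimal value is
\begin{equation}
\text{T}_{\text{over}}^\star(\bb)=\frac{\bar d}{\sum_{k\in\cS}\frac{1}{\alpha_k}},
\end{equation}
so once $\cS$ is fixed the only remaining freedom is the subband assignment $\bb$, which affects the objective solely through the rates $R_k$ and hence through the coefficients $\alpha_k=\frac{1}{v_k}+\frac{1}{R_k}$.

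Next I would exploit constraints C4, C5 together with the implicit requirement that every $k\in\cS$ transmits. Under these constraints, a feasible $\bb$ corresponds to an injective map from $\cS$ into $\cN$; assuming $|\cS|=N$ (the case the theorem targets), this is a bijection between users in $\cS$ and subbands in $\cN$. Writing $k_n$ for the user assigned to subband $n$, the sum in the denominator can be re-indexed as $\sum_{k\in\cS}\frac{1}{\alpha_k}=\sum_{n\in\cN}\frac{1}{\alpha_{k_n}}$, where $\alpha_{k_n}=\frac{1}{v_{k_n}}+\frac{1}{R_{k_n}}$ is now a per-subband weight that depends only on the channel of user $k_n$ on subband $n$. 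Minimizing $\text{T}_{\text{over}}^\star(\bb)$ is therefore equivalent to
\begin{equation}
\min_{k_1,\ldots,k_N}\frac{\bar d}{\sum_{n\in\cN}\frac{1}{\alpha_{k_n}}},\qquad k_n\in\cS,\; k_i\neq k_j \text{ for } i\neq j,
\end{equation}
which is exactly the form claimed in the theorem.

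The main obstacle is really bookkeeping rather than mathematics: I need to verify that the reformulation preserves feasibility in both directions. In the forward direction, any $\bb$ satisfying C2--C6 and the given $\cS$ defines a valid tuple $(k_1,\ldots,k_N)$ because C4 forces each subband to at most one user and, coupled with $\sum_n b_{k,n}=s_k=1$ for $k\in\cS$, forces the $k_n$ to be distinct. In the reverse direction, any tuple of distinct $k_n\in\cS$ reconstructs a $\bb$ via $b_{k,n}=1\Leftrightarrow k=k_n$ which satisfies C4--C6, and pairing it with the $d_k^\star$ from the lemma satisfies C1--C3. After this equivalence is established, observing that $\bar d$ is a positive constant shows that the objective is a monotone decreasing function of $\sum_{n\in\cN}\frac{1}{\alpha_{k_n}}$, so the reformulated problem is precisely a maximum-weighted bipartite matching with edge weights $\frac{1}{\alpha_{k,n}}=\bigl(\frac{1}{v_k}+\frac{1}{R_{k,n}}\bigr)^{-1}$, which is what the Hungarian algorithm will solve exactly in the next step.
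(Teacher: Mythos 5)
Your proposal is correct and follows essentially the same route as the paper's own (much terser) proof: invoke the lemma to express the optimal value as $\frac{\bar d}{\sum_{k\in\cS}\frac{1}{\alpha_k}}$, then re-index the sum over subbands and optimize the user--subband pairing. Your additional bookkeeping (constancy of $\Phi$, monotonicity of $\mathrm{Norm}$, and the two-way feasibility check identifying feasible $\bb$ with distinct tuples $(k_1,\ldots,k_N)$) only makes explicit steps the paper leaves implicit.
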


\begin{proof}
	Based on prior analysis, the optimal value of problem \eqref{obj1} can be expressed as $\max_{k\in\cS}\alpha_kd^*_k$. Substituting the result from Lemma 1, $d^*_k=\frac{\bar d}{\alpha_k \left(\sum_{k\in\cS}\frac{1}{\alpha_k}\right)}$, and we find the optimal value to be $\frac{\bar d}{\sum_{k\in\cS}\frac{1}{\alpha_k}}$. 
    Problem \eqref{obj1} assumes a fixed subband allocation. To solve the original problem \eqref{obj}, the pairing between each subband $n$ and user $k_n$ must be optimized further. This completes the proof.
\end{proof}

Theorem \ref{myth} establishes that the original minimization problem can be reformulated as an equivalent optimization problem with a modified objective and constraints. This reformulation simplifies the optimization process. The following proposition provides an efficient approach to solving the reformulated problem:
\begin{myprop}
	By interpreting $\frac{1}{\alpha_{k_n}}$ as the weight between subband $n$ and user $k_n$, the reformulated problem in Theorem \ref{myth} can be treated as a maximum weighted matching problem on a bipartite graph. This formulation inherently satisfies the constraints C4–C6 from the original problem \eqref{obj}. The optimal solution can be efficiently obtained using the Hungarian algorithm, which has a cubic time complexity \cite{com_opt}. The procedure is outlined in Algorithm 1.
\end{myprop}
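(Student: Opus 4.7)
The plan is to unfold the reformulated objective from Theorem \ref{myth} into a graph-theoretic form, verify that every constraint of the original problem \eqref{obj} survives the reformulation, and then appeal to the Hungarian algorithm as a black box. First, I would observe that since $\bar d$ is a fixed positive constant, minimizing $\frac{\bar d}{\sum_{n\in\cN}\frac{1}{\alpha_{k_n}}}$ over the assignment $(k_1,\ldots,k_N)$ is equivalent to maximizing the sum $\sum_{n\in\cN}\frac{1}{\alpha_{k_n}}$. This monotone transformation is the key conceptual step that turns a min-type coupling problem into an additive matching objective.

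Next, I would construct an explicit bipartite graph $G=(\cN\cup\cS,\cE)$, where one vertex class represents the $N$ subbands and the other represents the users in the scheduled set $\cS$, with an edge $(n,k)$ for every $n\in\cN$ and $k\in\cS$ of weight $w_{n,k}=\frac{1}{\alpha_k}$. A matching $\cM\subseteq\cE$ in $G$ corresponds to an injective assignment $n\mapsto k_n$, and its total weight equals exactly $\sum_{n\in\cN}\frac{1}{\alpha_{k_n}}$. Maximizing this weight over matchings therefore coincides with the reformulated objective from Theorem \ref{myth}.

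The third step is to check that each of the original constraints C4--C6 is encoded by the matching structure itself, so no external penalty is required. The injectivity of the assignment $n\mapsto k_n$ guarantees that no two subbands share a user, which is exactly C5, while the fact that each subband appears in at most one edge of $\cM$ yields C4. Since the decision variable is reduced to picking at most one edge per subband, the binary indicators $b_{k,n}\in\{0,1\}$ in C6 are encoded by the incidence indicator of the matching $\cM$, so feasibility with respect to C4--C6 is automatic. Finally, because we wish to maximize the overall weighted matching in $G$, I would invoke the Hungarian algorithm, whose correctness and $O(|\cS|^3)$ runtime on complete bipartite graphs are classical results, and note that if $|\cS|>N$ one simply augments with dummy subbands of zero weight to obtain a square cost matrix without altering the optimum.

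The main obstacle I anticipate is not any single technical step but rather making the two reductions transparent: first the min-to-max reduction driven by the monotonicity of $x\mapsto \bar d/x$ on $x>0$, and second the bookkeeping that shows the matching constraints genuinely reproduce C4--C6 rather than relaxing them. Once both are stated carefully, the invocation of the Hungarian algorithm is standard and closes the proof.
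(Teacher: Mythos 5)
Your proposal is correct and follows essentially the same route the paper takes (the paper offers no formal proof of the Proposition, only the assertion plus Algorithm~1): the monotone reduction from minimizing $\bar d/\sum_n \frac{1}{\alpha_{k_n}}$ to maximizing $\sum_n \frac{1}{\alpha_{k_n}}$, the bipartite-matching encoding of C4--C6, and the appeal to the Hungarian algorithm. One small notational caution: you write the edge weight as $w_{n,k}=\frac{1}{\alpha_k}$, but $\alpha_{k_n}=\frac{1}{v_{k_n}}+\frac{1}{R_{k_n}}$ depends on the pair $(k,n)$ through the rate $R_{k,n}=B_n\log_2\bigl(1+\tfrac{P_kg_{k,n}}{N_0B_n}\bigr)$, and it is precisely this pair-dependence that makes the assignment problem nontrivial and the Hungarian algorithm necessary rather than a simple top-$N$ user selection.
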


\begin{algorithm}[t]
\small
\caption{Subband-user pairing optimization}
\begin{algorithmic}[1]
\REQUIRE $\{\frac{1}{\alpha_{k_n}}\}_{k\in\cS,n\in\cN}$
\ENSURE Optimal subband-user pairing \( P \)
\STATE Construct matrix \( \mathbf{U} \) of size \( N \times N \), with elements $U_{n,k} = -\frac{1}{\alpha_{k_n}}$.
\FOR{each row \( n \in \mathcal{N} \)}
    \STATE Subtract the smallest value in row \( n \) from each element.
\ENDFOR
\FOR{each column \( k \in \mathcal{S} \)}
    \STATE Subtract the smallest value in column \( k \) from each element.
\ENDFOR
\STATE Draw lines through the minimum number of rows and columns to cover all zero entries in \( \mathbf{U} \). Let \( h \) denote the number of lines.
\WHILE{\( h < N \)}
    \STATE Identify the smallest uncovered value in \( \mathbf{U} \).
    \STATE Subtract this value from all uncovered elements.
    \STATE Add this value to entries covered by two lines.
    \STATE Update \( h \).
\ENDWHILE
\STATE Construct a bipartite graph \( G \) with vertices representing subbands \( \mathcal{N} \) and users \( \mathcal{S} \). Connect vertices based on zero entries in \( \mathbf{U} \).
\STATE Apply the augmenting path algorithm to find the maximum matching \( P \) of \( G \).
\RETURN Optimal pairing $P$ 
\end{algorithmic}
\end{algorithm}
Next, our attention will shift towards determining $\cS$. We intend to address the user scheduling subproblem using the two-sided swapping method. Building on the preceding analysis, given $\cS$, the optimal solution to the original problem (\ref{obj}) can be derived via Algorithm 1. For analytical ease, we define $\bar\cS = \cK \setminus \cS$. Considering $s' \in {\cS}^{\dag}$ and $s \in \cS$, we interchange the positions of $s'$ and $s$, resulting in an updated user scheduling set, denoted as ${\cS}^{\dag} = \cS\cup\{s'\}\setminus\{s\}$. If ${\cS}^{\dag}$ leads to an improved objective value for problem (\ref{obj}), we assign $\cS = {\cS}^{\dag}$. If not, the swap between $s'$ and $s$ is disallowed. We iterate through these steps until no such pair of users can be exchanged. For brevity, we summarize the proposed joint optimization algorithm for the problem (\ref{obj}) in Algorithm 2. 
\begin{algorithm}
\small
\caption{Joint optimization algorithm for problem (\ref{obj})}
\begin{algorithmic}[1]
\REQUIRE An initial set $\cS$ and $\bar\cS = \cK \setminus \cS$.
\STATE Define $\cH=\{(s,s')|s\in\cS,~s'\in\bar\cS\}$.
\WHILE{$\cH\neq\emptyset$}
\STATE Update ${\cS}^{\dag} = \cS\cup\{s'\}\setminus\{s\}$;
\STATE Calculate the objective value under the scheduled user set ${\cS}^{\dag}$ in accordance with Algorithm 1;
\IF{${\cS}^{\dag}$ results in an improved system performance} 
\STATE Update $\cS={\cS}^{\dag}$.
\ENDIF
\STATE Update $\cH=\cH\setminus\{(s,s')\}$.
\ENDWHILE
\RETURN $\cS$.
\end{algorithmic}
\end{algorithm}

\section{Numerical Simulations}
This section presents numerical simulations that validate the efficiency of the proposed joint optimization strategy. 
Without loss of generality, the system parameters are configured as follows: we consider an MCN consisting of one BS and a varying number of users.
Unless otherwise specified, the number of users, number of subareas, number of subbands, and the weight in the objective function $\omega$ are given by $K = 20$,
$M = 10$, $N = 10$, and $\omega=0.5$, respectively.
Each subband has a bandwidth of $W_n=1\ \text{MHz}$, and the noise power spectral density is fixed at $-174\ \text{dBm/Hz}$. 
The adopted radio propagation model incorporates distance-based path loss, shadowing effects, and small-scale fading, as detailed in \cite{Greentouch}. Path loss adheres to the formula $128.1 + 37.6 \log_{10}(x)$, where $x$ represents the distance in meters between the BS and a randomly placed user, with distances drawn from a range of $50$ to $300$ meters. 
Shadowing follows a log-normal distribution with an $8\ \text{dB}$ standard deviation. Independent Rayleigh fading with a variance of $1$ characterizes the small-scale fading on each subband for each user.
In addition, User data rates for sensing tasks are randomly distributed between $10^5$ and $10^6$ bits per second, in accordance with \cite{rate}. 
The transmit power per user is uniformly distributed within the interval of $0.1$ to $0.2\ \text{Watts}$. The bit size of each sensing task is selected from the range $[10^3, 10^4]$.
The scaling factor of the normalization function $\text{Norm}(x,\eta)$ is $\eta = 10^6$.
The simulation results are obtained by averaging over $10^5$ samples of channel gains and transmit power of users. 

We refer to our optimal resource allocation algorithm, introduced in Section IV, for simplicity in performance comparisons, as the "Proposed" method. Additionally, we evaluate the following three benchmark approaches:
\begin{itemize} 
\item Benchmark 1: This is the method proposed in [5], where only the total latency of the system was considered.
\item Benchmark 2: In this approach, the $N$ users with the highest sensing rates are scheduled. 
$N$ subbands are randomly selected and assigned to the scheduled users.
A fractional task allocation method is employed, where the task allocation is given by $d_{k_n}=\bar d\frac{g_{k_n,n}}{\sum_{j\in\cN}g_{k_j,j}}$, $n\in\cN$.
\item Benchmark 3: In this benchmark, subbands are allocated to users with the highest channel gain for each subband.
The task allocation follows the fractional strategy as described in Benchmark 2.
\end{itemize}

\begin{figure}
	\centering
	\includegraphics[width=0.6\linewidth]{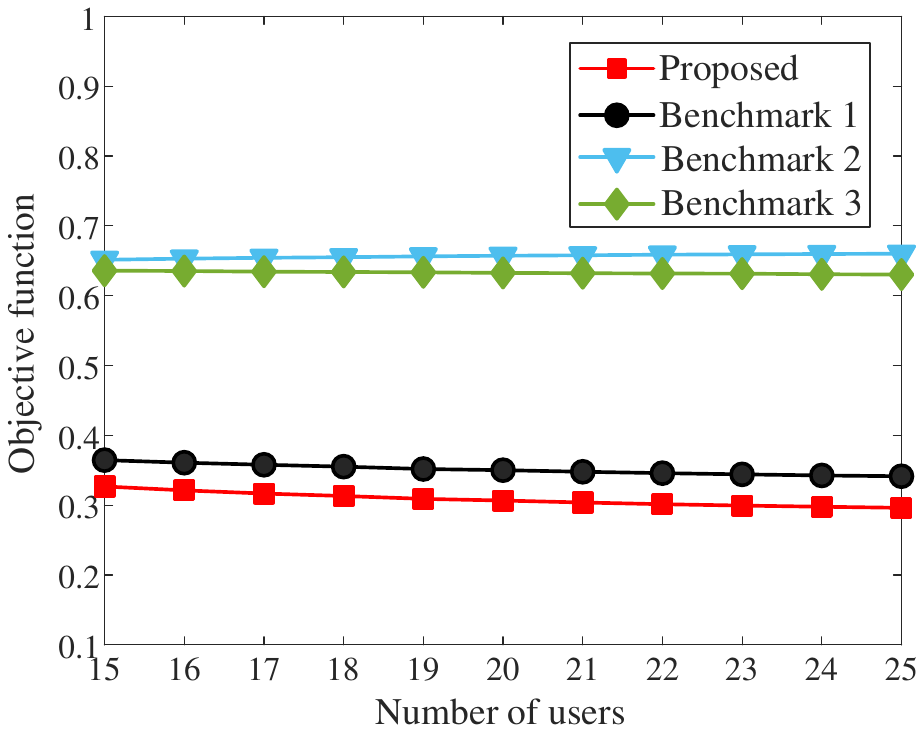}
	\caption{Objective function versus the number of users.}
	\label{user}
\end{figure}
From Fig. \ref{user} we can see that the objective function decreases as the number of users $K$ increases. 
This effect is more pronounced for the proposed method and Benchmark 1. 
The increased user diversity allows the system to select users with better sensing and transmission capabilities, thus minimizing overall latency. 
Additionally, the probability of covering more subareas increases, enhancing overall network performance.
The proposed method consistently outperforms the three benchmarks. 
Benchmark 2 and Benchmark 3, which rely on random and sensing-based allocations, perform worse due to the lack of consideration for latency and coverage.

\begin{figure}
	\centering
	\includegraphics[width=0.6\linewidth]{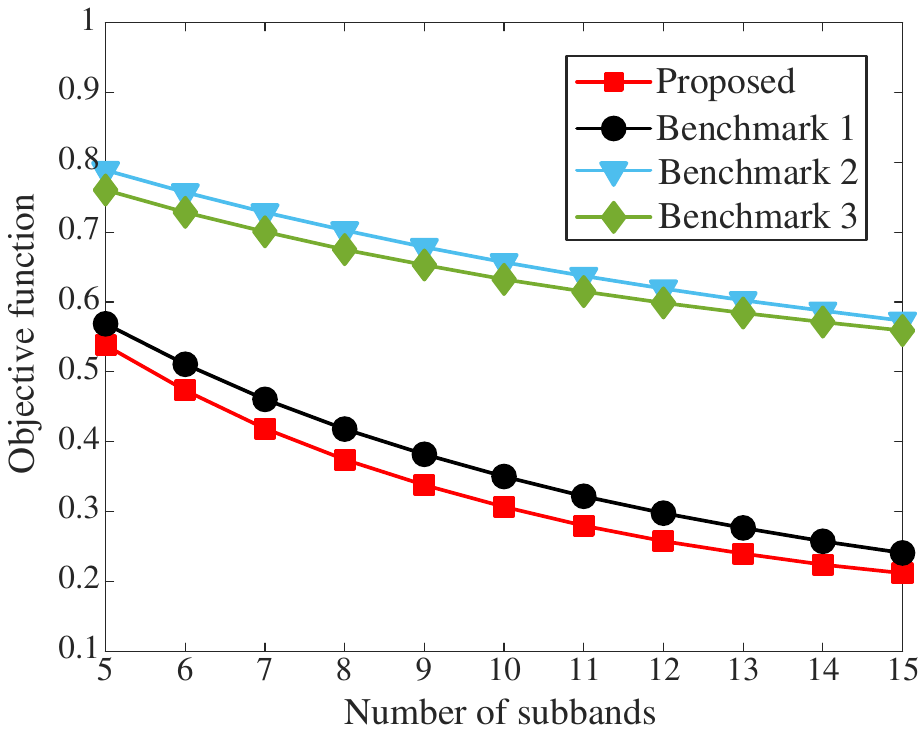}
	\caption{Objective function versus the number of subbands.}
	\label{subband}
\end{figure}
Fig. \ref{subband} shows that the objective function declines with an increasing number of subbands $N$. 
This is attributed to the diversity gain and the ability to schedule more users, thereby reducing total latency. 
The proposed method and Benchmark 1 show a steeper decline compared to Benchmark 2 and Benchmark 3.
The performance gain of the proposed method reaches $13.63\%$, $170.71\%$, and $164.07\%$ compared to Benchmark 1, Benchmark 2, and Benchmark 3, respectively.

\begin{figure}
	\centering
	\includegraphics[width=0.6\linewidth]{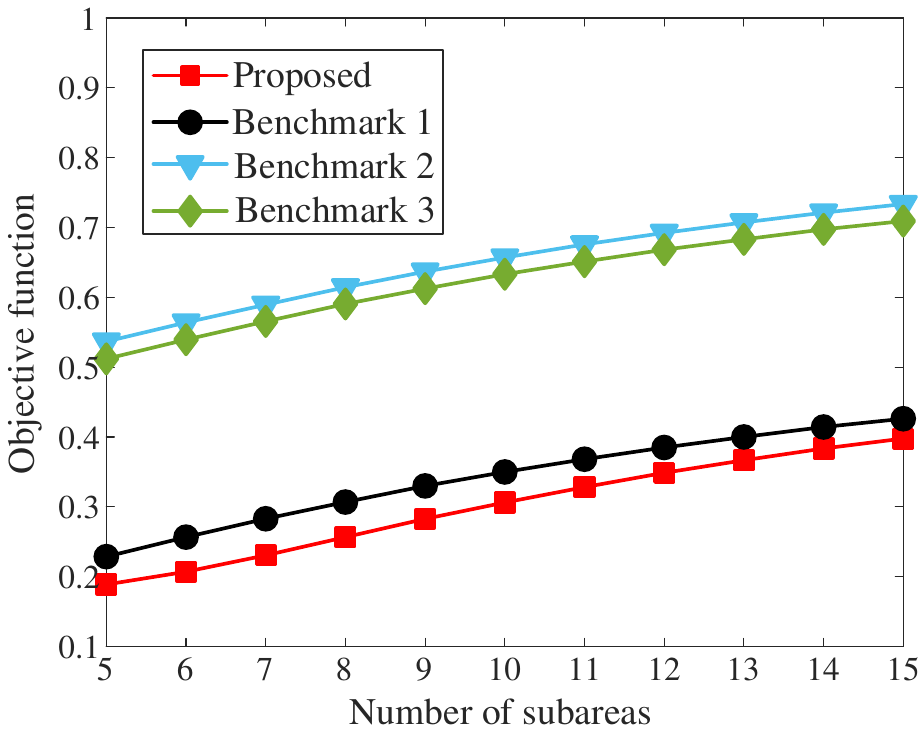}
	\caption{Objective function versus the number of subareas.}
	\label{subarea}
\end{figure}
Fig. \ref{subarea} demonstrates that the objective function increases as the number of subareas $M$ grows. 
A larger $M$ makes it more challenging to ensure full coverage by the scheduled users. Despite this, the proposed method consistently surpasses all benchmarks by effectively allocating subbands and distributing sensing tasks.
Benchmark 2 consistently underperforms because it prioritizes users with high sensing rates, neglecting the importance of transmission conditions and subarea coverage.

\begin{figure}
	\centering
	\includegraphics[width=0.6\linewidth]{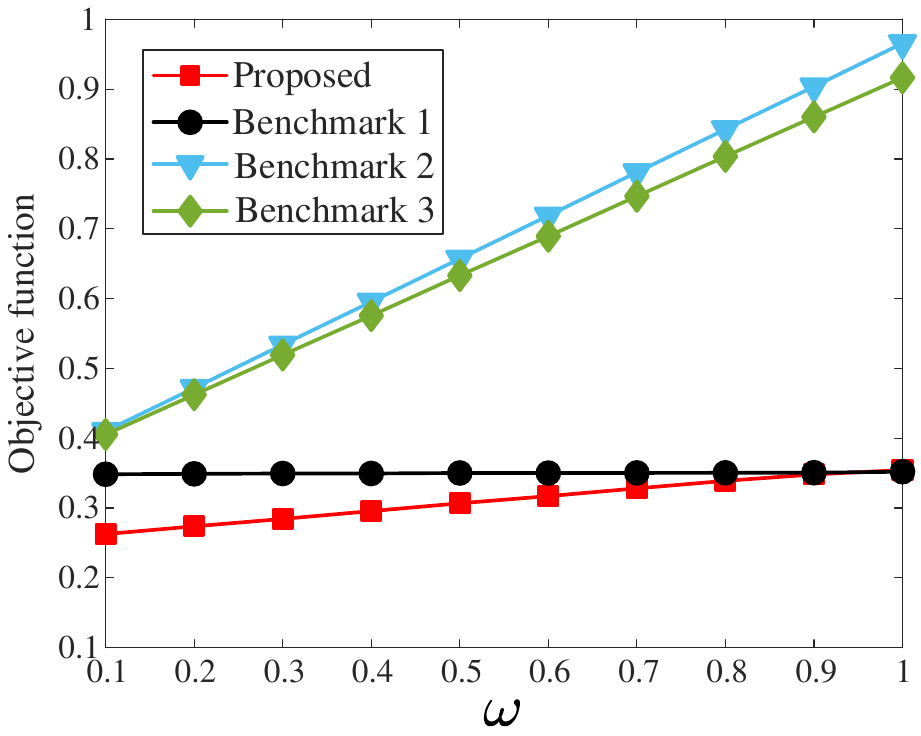}
	\caption{Objective function versus the weight $\omega$.}
	\label{omega}
\end{figure}
Fig. \ref{omega} highlights the trade-off between latency and coverage as controlled by the weight $\omega$.
When $\omega=1$, the focus is solely on minimizing latency, resulting in identical performance for the proposed method and Benchmark 1. 
However, as $\omega$ decreases, i.e., the emphasis on coverage increases, the proposed method demonstrates superior performance over Benchmark 1.
This highlights the key ability of the proposed scheme to optimize both the coverage and latency of the system. Moreover, since Benchmark 1 does not consider coverage, the objective value remains consistent with $\omega$.
\vspace{-0.2cm}
\section{Conclusion}
\vspace{-0.1cm}
This paper presented a novel coverage-aware resource allocation strategy for mobile crowdsensing networks, addressing the challenges of minimizing task latency and ensuring subarea coverage. 
By formulating the problem as a weighted minimization of latency and coverage gaps, we proposed an optimal subband and task allocation scheme, utilizing the Hungarian algorithm and a two-sided swapping method for efficient user scheduling. 
Simulation results demonstrated significant performance improvements over existing benchmarks, highlighting the effectiveness of our approach in enhancing the overall efficiency of mobile crowdsensing systems.


\begin{thebibliography}{100}
\bibitem{wenshuai}
W. Liu, Y. Fu, Z. Shi, and H. Wang, ``When digital twin meets 6G:
Concepts, obstacles, and research prospects,'' \textit{IEEE Commun. Mag.,}
Sep. 2024. [Online]. Available: https://arxiv.org/pdf/2409.02008

\bibitem{yfu2023}
Y. Fu, Y. Shan, Q. Zhu, K. Hung, Y. Wu, and T. Q. S. Quek, ``A
distributed microservice-aware paradigm for 6G: Challenges, principles,
and research opportunities,'' \textit{IEEE Netw.,} vol. 38, no. 3, pp. 163--170, May. 2024.

\bibitem{MCS2}
Z. Chen and Z. Yu, ``Intelligent offloading in blockchain-based mobile
crowdsensing using deep reinforcement learning'' \textit{IEEE Commun. Mag.,} vol. 61, no. 6, pp. 118--123, Jun. 2023.

\bibitem{incentive_review}
E. Wu and Z. Peng, ``Research progress on incentive mechanisms in
mobile crowdsensing,'' \textit{IEEE Internet Things J.,} vol. 11, no. 14, pp.24621--24633, Jul. 2024.

\bibitem{WCNC2024}
Y. Fu, Y. Zhang, Z. Shi, H. Wang, and Y. Liu, ``Subband and sensing
task allocation for next-generation mobile crowdsensing networks: An
optimal framework,'' in \textit{Proc. IEEE WCNC}, pp. 1--6, Jul. 2024.

\bibitem{us_2}
X. Liu, H. Chen, Y. Liu, W. Wei, H. Xue, and F. Xia, ``Multitask data
collection with limited budget in edge-assisted mobile crowdsensing,''
\textit{IEEE Internet Things J.,} vol. 11, no. 9, pp. 16845--16858, May. 2024.

\bibitem{Li}
X. Li, G. Feng, Y. Sun, S. Qin, and Y. Liu, ``A unified framework for
joint sensing and communication in resource constrained mobile edge
networks,'' \textit{IEEE Trans. Mobile Comput.,} vol. 22, no. 10, pp. 5643--5656, Oct. 2023.

\bibitem{wcnc_wenshuai}
W. Liu, Y. Zhou, and Y. Fu, ``Learning based dynamic resource allocation
in UAV-assisted mobile crowdsensing networks,'' in \textit{Proc. IEEE WCNC},
pp. 1--6, Jul. 2024.

\bibitem{com_opt}
H. W. Kuhn, ``The hungarian method for the assignment problem,'' \texit{Naval Research Logistics Quarterly,} pp. 83--97, Mar. 1955.

\bibitem{Greentouch}
GreenTouch, \textit{Mobile communications WG architecture doc2: Reference
scenarios,} May. 2013.

\bibitem{rate}
X. Li, C. You, S. Andreev, Y. Gong, and K. Huang, ``Wirelessly
powered crowd sensing: Joint power transfer, sensing, compression, and
transmission,'' \textit{IEEE J. Sel. Areas Commun.,} vol. 37, no. 2, pp. 391--406, Feb. 2019.

\bibitem{DeepHuang}
L. Huang, S. Bi, and Y. -J. A. Zhang, ``Deep Reinforcement Learning for Online Computation Offloading in Wireless Powered Mobile-Edge Computing Networks,'' \textit{IEEE Trans. on Mobile Comput.}, vol. 19, no. 11, pp. 2581--2593, Nov. 2020.
\end{thebibliography}

\end{document}